\documentclass[a4paper]{article}

\usepackage{fullpage, url,amsmath,amsfonts,amssymb,tedmath,mathtools}
\newcommand{\ord}{$\rm{OrderMod}$}

\parskip.1cm

\date{}
\title{Maximum distance separable codes to order} 
\author{Ted Hurley\footnote{National University of Ireland
Galway. Ted.Hurley@NuiGalway.ie} , Donny Hurley\footnote{Institute of
Technology, Sligo. hurley.donny@itsligo.ie },   Barry
Hurley\footnote{barryj\_2000@yahoo.co.uk}} 
\begin{document}
\maketitle
\begin{abstract}

 
Maximum distance separable (MDS) are
 constructed to required specifications. The codes are explicitly given  over
 finite fields with  efficient encoding and decoding
 algorithms. Series of such codes over
 finite fields with
 ratio of distance to length approaching $(1-R)$ for given $R, \, 0 < R < 1$
 are derived. For given rate $R=\frac{r}{n}$,  with $p$ not dividing 
 $n$,  series of codes over finite fields of characteristic $p$
 are constructed     
 such that the ratio of the distance to the
 length approaches $(1-R)$. For a given field $GF(q)$ MDS codes of the
 form $(q-1,r)$ are constructed for any $r$. 
The codes are encompassing, easy to
 construct with efficient encoding 
 and decoding algorithms of complexity  $\max\{O(n\log n), t^2\}$, where
 $t$ is the error-correcting capability of the code.      
 
\end{abstract}
\section{Introduction}

  Coding theory is at the heart of modern day communications. Maximum distance separable, MDS, codes
 are at the heart of coding theory. Data needs
 to be transmitted {\em safely } and sometimes securely. Best rate and error-correcting capabilities are the aim, and MDS codes can meet the requirements;  they  correct the maximum number of errors for given length and dimension. 



General methods  for constructing  MDS codes over finite fields are  given in 
 Section \ref{construct} following \cite{hurley,hurley1,hurleycomp}. 
 The codes  are explicitly constructed over finite fields with  efficient encoding and decoding
 algorithms of complexity $\max\{O(n\log n), O(t^2)\}$,
 where $t$ is the error-correcting capability. 
These are exploited. For given $\{n,r\}$  
  MDS $(n,r)$ codes are constructed over finite fields with  characteristics
 not dividing  $n$, section \ref{general2}. For given rate and given
error-correcting capability series of MDS codes to these specifications are
constructed over finite fields, section \ref{required}. 
For given
rate $R$, $0<R<1$,  series of MDS codes are constructed over finite fields in
which the ratio of the distance by the length approaches $(1-R)$, section \ref{infinite}. 



For a given finite field $GF(q)$,  MDS $(q-1,r)$ codes of different types are
 constructed over $GF(q)$  for any given $r, \, 1\leq r \leq (q-1)$, section \ref{length}. The
 codes are explicit with efficient encoding and decoding algorithms as
 noted. In addition for each $n/(q-1)$, MDS codes of length $n$ and
 dimension  $r$ are constructed over $GF(q)$ for any given $r, \, 1 \leq r \leq n$. In particular for $p$ a prime,  MDS $(p-1, r)$ codes are constructed in $GF(p) = \Z_p$ in which case the arithmetic is modular arithmetic which works smoothly and very  efficiently.

For given $R = \frac{r}{n}$, $0<R<1$,  with  $p\not\vert\,n$,  
 series of codes over finite fields  of characteristic $p$  are
constructed in which the ratio of the distance to the length approaches
$(1-R)$, section \ref{charp}. Note $0<R<1$ if and only if $0<(1-R)<1$.     In particular such series are constructed in fields of characteristic $2$ for cases where  the denominator  $n$ of the given rate  is odd. 

Series of MDS codes over prime fields $GF(p) = \Z_p$ are constructed such that the ratio of the distance to the length approaches $(1-R)$ for given $R, 0<R<1$; in these cases the arithmetic is modular
arithmetic which is extremely  efficient and easy to implement, section \ref{primes}. 

Samples are given in the different sections and an example is given on
the workings of the decoding algorithms in section \ref{err}. The
explicit examples given  need to be of reasonably small  size for
display here but in general there is no restriction on the length or dimension in practice. 

Explicit efficient encoding and decoding algorithms of complexity $\max\{O(n\log n), O(t^2)\}$  exist for the codes and this  is explained in section \ref{complexity}.   

The codes  are encompassing  and excel known used and practical codes.
See for example section \ref{prime} for the following:  
MDS codes of the form $(255,r)$ for
any $r, 1\leq r \leq 255$ are constructed over $GF(2^8)$.  They are constructed explicitly and have efficient encoding and decoding algorithms which reduce to  finding a solution of a Hankel $ t\ti (t+1)$ system, where $t$ is the error-correcting capability, and matrix multiplications by a Fourier matrix.  These can be
compared to the Reed-Solomon codes over $GF(2^8)$. The method extends easily to the formation of MDS codes of the form $(511,r)$ for any $r, 1
\leq r\leq 511$ over $GF(2^9)$, and then further to MDS codes $(2^k-1,r)$
over $GF(2^k)$.  
 Codes over prime fields are particularly nice and as an example
$(256,r)$ codes are constructed over $GF(257)=\Z_{257}$. The arithmetic
is modular arithmetic over $\Z_{257}$; these perform better than the 
$(255,r)$ RS codes over $GF(2^8)$. These can also easily  be extended for larger primes as for example $(10008,r)$ MDS codes over $GF(10009)$. 

In general: For  any prime $p$, $(p-1,r)$ codes
over $GF(p)=\Z_{p}$ are constructed for any $r, 1\leq r \leq (p-1)$; for any $k$, $(2^k-1, r)$ codes are
constructed over $GF(2^k)$ and any $r, 1\leq r \leq (2^k-1)$.  As
already noted the constructed codes have  (very) efficient encoding and decoding algorithms.

The encoding and decoding methods involve multiplications by a Fourier matrix and finding a solution to a Hankel $t \ti (t+1)$ system, where $t$ is the error-correcting capability of the MDS code.

Background on coding theory and field theory may be
found  in \cite{blahut}, \cite{mac}  or \cite{mceliece}.
An $(n,r)$ linear code is a linear code of length $n$ and dimension $r$;
the {\em rate} of the code is $\frac{r}{n}$. An $(n,r,d)$ linear code is a code of length
$n$, dimension $r$ and (minimum) distance $d$. The code is an MDS code
provided $d=(n-r+1)$, which is the maximum distance an $(n,r)$ code can
attain. The error-capability of $(n,r,d)$ is $t=\floor{\frac{d-1}{2}}$ which is the
maximum number of errors the code can correct successfully. 
The finite field of order $q$ is denoted by $GF(q)$ and of necessity $q$ is a power of a prime.   

The
 codes are generated by the unit-derived method -- see
 \cite{hur1,hur2,hur0}  -- by choosing  rows in sequence of
 Fourier/Vandermonde  matrices over finite fields  following the methods
 developed in \cite{hurley,hurley1}. They are easy to implement, explicit and with efficient encoding and decoding algorithms of complexity $\max\{(O\log n), O(t^2)$ where $t$ is the error-correcting capability. 

 \subsubsection{Particular {\em types} of MDS codes}\label{types1}
 Different {\em types} of MDS codes, such as Quantum or Linearly
 complementary dual (LCD) codes,   can be constructed based on general
 schemes; see section \ref{further} for references on these
 developments. This section also notes a reference to using these
 types of error-correcting codes in solving underdetermined
 systems of equations for {\em compressed sensing} applications. 
 
 
\section{Constructions}\label{construct} 

\subsection{Background material}\label{background} 

In \cite{hur1,hur0} systems of {\em unit-derived codes} are developed;
 a suitable version  in book chapter form is available at \cite{hur2}. 
 In summary the unit-derived codes  are obtained  as follows. Let $UV= I_n$ in a ring.
  Let $G$ be the $r\ti n$ matrix generated by choosing any $r$ rows of $U$
 and let $H\T$ be the $n \ti (n-r) $ matrix obtained from $V$ by
 eliminating the corresponding columns of $V$. Then $G$ generates an
 $(n,r)$ code and $H$ is the the check matrix of the code. The system
 can be considered in format as  $GH\T = 0_{r\ti (n-r)}$. 

 When the first rows are chosen as generator matrix,  the process may be presented as follows.
Let $UV = I_n$ with $U= \begin{pmatrix} A \\ B \end{pmatrix}, V= (C,D)$
where $A$ is an $r\ti n$ matrix, $B$ is an $(n-r) \ti n$ matrix, $C$ is
an $n\ti r$ matrix and $D$ is an $n\ti (n-r)$ matrix. Then 
$UV=I$ gives $\begin{pmatrix} A \\ B \end{pmatrix} (C,D) = \begin{pmatrix} I_r & 0 \\ 0 & I_{n-r}\end{pmatrix}$ and so in
particular this gives $AD = 0_{r\ti (n-r)}$. The matrices have full
rank. Thus with $A$ as the generating matrix of an $(n,r)$ code it is
seen that $D\T$ is the check matrix of the code. 

By explicit row selection,  the process is as follows. 
 Denote  the rows of $U$ in order by $\langle e_0,e_1, \ldots, e_{n-1}\rangle$ and the columns of $V$ in order by $\langle f_0, f_1, \ldots, f_{n-1}\rangle$.
Then $\begin{pmatrix}e_0 \\ e_1 \\ \vdots \\ e_{n-1}\end{pmatrix}(f_0,f_1, \ldots, f_{n-1}) = I_n$.
From this it is seen  that $e_if_i= 1, e_if_j = 0, i\neq j$.

Thus if $G=\begin{pmatrix}e_{i_1} \\ e_{i_2} \\ \vdots \\
	   e_{i_r}\end{pmatrix}$ (for distinct  $e_{i_k}$)
and $H\T=(f_{j_1}, f_{j_2}, \ldots, f_{j_{n-r}})$ where $\{j_1, j_2, \ldots, j_{n-r}\} = \{0,1,\ldots, n-1\} / \{i_1,i_2, \ldots, i_r\}$. Then $GH\T = 0_{r\ti (n-r)}$.

Both  $G$ and $H$ have full rank. 

When the  first $r$ rows chosen this gives
$\begin{pmatrix} e_0 \\ e_1 \\ \vdots \\ e_r \end{pmatrix} (f_0,
f_{n-1}, f_{n-2}, \ldots, f_{n-r}) = 0_{r\ti (n-r)}$ for the code system
expressing the generator and check matrices. 

\quad

\subsection{Vandermonde/Fourier matrices} 

 When  the rows are chosen from  Vandermonde/Fourier matrices and taken
 in arithmetic sequence with  arithmetic difference $k$ satisfying
 $\gcd(n,k)=1$ then MDS codes are obtained.  In particular when $k=1$,
 that is when  the rows are taken consecutively, MDS codes are
 obtained. This follows from results in   \cite{hurley} and these are
 explicitly recalled  in Theorems \ref{genthmfour}, \ref{genthmvan} below. 

The  $n\ti n$   Vandermonde matrix $V(x_1,x_2,\ldots, x_n)$ is defined by

$V=V(x_1,x_2,\ldots,x_n) = \begin{pmatrix}
1&1&\ldots &1 \\ x_1& x_2& \ldots& x_n \\ \vdots & \vdots &
\vdots & \vdots \\ x_1^{n-1} & x_2^{n-1} & \ldots &
x_n^{n-1} \end{pmatrix}$ 

 As is well known, the determinant of $V$ is $\prod_{i<j}(x_i-x_j)$.
Thus $\det(V) \neq 0$ if and only the $x_i$ are distinct.

A primitive $n^{th}$ root of unity $\om$ in a field $\F$ is an element
$\om$ satisfying $\om^n = 1_{\F}$ but $\om^i \neq 1_{\F}, 1\leq i <
n$. Often $1_\F$ is written simply as $1$ when the field is
clearly understood. 
   
The field $GF(q)$ (where $q$ is necessarily a power of a prime) contains
a primitive $(q-1)$ root of unity, see \cite{blahut,mceliece} or any
book on field theory, and such a
root  is referred to as a {\em primitive element in the field
$GF(q)$}. Thus also the field $GF(q)$ contains a primitive $n^{th}$ roots
of unity for any $n/(q-1)$.  

A Fourier $n\ti n$ matrix over $\F$ is a special type of Vandermonde matrix in which $x_i=\om^{i-1}$ and  $\om$ is a primitive $n^{th}$ root of unity in $\F$. Thus: 

$ F_n= \begin{pmatrix}1 & 1 & 1& \ldots & 1 \\ 1 & \om & \om^2 & \ldots &
	 \om^{n-1} \\ 
1 & \om^2 & \om^4 & \ldots & \om^{2(n-1)} \\ \vdots & \vdots & \vdots &
    \ldots & \vdots \\ 1 & \om^{n-1} & \om^{2(n-1)} & \ldots &
    \om^{(n-1)(n-1)} \end{pmatrix}$

is a Fourier matrix over $\F$ where $\om$ is a primitive $n^{th}$ root of unity in $\F$. 


Then  $$\begin{pmatrix}1 & 1 & 1& \ldots & 1 \\ 1 & \om & \om^2 & \ldots & 
	 \om^{n-1} \\ 
1 & \om^2 & \om^4 & \ldots & \om^{2(n-1)} \\ \vdots & \vdots & \vdots &
    \ldots & \vdots \\ 1 & \om^{n-1} & \om^{2(n-1)} & \ldots &
    \om^{(n-1)(n-1)} \end{pmatrix} \begin{pmatrix}1 & 1 & 1& \ldots & 1 \\ 1
				    & \om^{n-1} & \om^{2(n-1)} & \ldots
				    & \om^{(n-1)(n-1)}\\ 
1 & \om^{n-2} & \om^{2(n-2)} & \ldots &\om^{(n-1)(n-2)} \\ \vdots & \vdots & \vdots &
    \ldots & \vdots \\ 1 & \om & \om^{2} & \ldots &
    \om^{(n-1)} \end{pmatrix} = nI_n$$

Hence  $F_nF_n^*= nI_n$ where $F_n^*$ denotes the  second matrix on the left of the equation.
 Replacing $\om$ by $\om^{n-1}$ in $F_n$ is seen to give this $F_n^*$ 
 which  itself is a Fourier matrix. 
Refer to section \ref{type} for results on which fields  contain an
$n^{th}$ root of unity but in any case an $n^{th}$ root of unity can only 
exist in a field whose characteristic does not 
divide $n$.  

The following theorem on deriving MDS codes from Fourier matrices by unit-derived scheme is contained in \cite{hurley}:

\begin{theorem}\label{genthmfour}\cite{hurley} 

(i) Let $F_n$ be a Fourier $n\ti n$ matrix over a field $\F$. 
 Let $\mathcal{C}$ be the
 unit-derived code obtained by choosing in order $r$ rows of $V$ in
 arithmetic sequence with arithmetic 
difference $k$ and  $\gcd(n,k) = 1$. Then
 $\mathcal{C}$ is an MDS $(n,r,n-r+1)$ code. In particular this is true when
 $k=1$, that is when the $r$ rows are chosen in succession. 

(ii) Let $\mathcal{C}$ be as in part (i). Then there exist efficient encoding and decoding algorithms for $\mathcal{C}$.

\end{theorem}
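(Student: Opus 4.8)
The plan is to prove part (i) by combining the unit-derived framework from Section~\ref{background} with the Vandermonde determinant formula, and to prove part (ii) by unpacking what ``efficient'' means for Fourier-matrix arithmetic.

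For part (i), recall that the code $\mathcal{C}$ is generated by the $r\ti n$ matrix $G$ whose rows are chosen rows of $F_n$, and by the unit-derived scheme its check matrix $H$ comes from the complementary columns of $F_n^* = n^{-1}F_n^{-1}$ (up to the scalar $n^{-1}$, using $F_nF_n^* = nI_n$). Since $\mathcal{C}$ is automatically an $(n,r)$ linear code, the entire content of the MDS claim is that its minimum distance equals $n-r+1$, equivalently that \emph{every} $r$ columns of $G$ are linearly independent (so that no nonzero codeword has weight less than $n-r+1$, by the standard fact that $d$ equals the smallest number of linearly dependent columns of the parity-check matrix, applied dually). First I would observe that the rows of $F_n$ taken in arithmetic sequence with difference $k$ are the rows indexed by $0, k, 2k, \ldots, (r-1)k \pmod n$; the row indexed by $j$ is $(1,\om^{j},\om^{2j},\ldots,\om^{(n-1)j})$. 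Selecting any $r$ of the $n$ columns, say columns $c_1,\ldots,c_r$, the resulting $r\ti r$ submatrix has $(s,t)$-entry $\om^{(s-1)k\,c_t} = (\om^{k c_t})^{s-1}$, which is \emph{exactly} a Vandermonde matrix in the nodes $y_t := \om^{k c_t}$, $t=1,\ldots,r$. Its determinant is $\prod_{s<t}(y_s - y_t)$, which is nonzero precisely when the $y_t$ are distinct.

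Hence the crux reduces to a single injectivity statement: the map $c \mapsto \om^{kc}$ is injective on $\{0,1,\ldots,n-1\}$. Here I would use that $\om$ is a primitive $n^{\text{th}}$ root of unity together with $\gcd(n,k)=1$: if $\om^{kc_s} = \om^{kc_t}$ then $\om^{k(c_s - c_t)} = 1$, so $n \mid k(c_s-c_t)$, and since $\gcd(n,k)=1$ this forces $n \mid (c_s - c_t)$, whence $c_s = c_t$ as the column indices lie in a single residue system mod $n$. Thus the $y_t$ are distinct, every $r\ti r$ minor of $G$ is nonzero, every $r$ columns are independent, and $d = n-r+1$, so $\mathcal{C}$ is MDS. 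The $k=1$ case is immediate since $\gcd(n,1)=1$. The main obstacle is conceptual rather than computational: one must be careful that the \emph{row} arithmetic-sequence condition (controlled by $k$) and the Vandermonde structure interact correctly, and that it is the row-index step $k$ (not the chosen columns) that must be coprime to $n$; once the submatrix is correctly identified as Vandermonde in the nodes $\om^{kc_t}$, the coprimality hypothesis does exactly the work of guaranteeing distinct nodes, which is the only nontrivial point.

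For part (ii), the plan is to exhibit, rather than merely assert, algorithms matching the complexity $\max\{O(n\log n), O(t^2)\}$ quoted in the introduction, relying on the Fourier structure. Encoding is multiplication of a length-$r$ (padded to length-$n$) information vector by $F_n$, which is a discrete Fourier transform over $\F$ and so costs $O(n\log n)$ via an FFT when $n$ has suitable factorization, or in general can be computed by fast transform techniques. For decoding, since $\mathcal{C}$ has parity checks given by evaluations at consecutive powers of $\om$, the code has a BCH-like structure: the syndromes are obtained by applying $F_n^*$ (again an $O(n\log n)$ transform) to the received word, and then a nonzero-error pattern of weight at most $t$ is recovered by solving a Hankel $t\ti(t+1)$ system (the key-equation / Berlekamp--Massey step), which costs $O(t^2)$, followed by locating and evaluating the errors via another transform. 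I would simply assemble these standard ingredients and note that they are detailed in the cited references \cite{hurley,hurley1} and in Section~\ref{complexity}; the existence claim of part (ii) then follows. The only care needed is to confirm the syndrome map really does annihilate codewords and expose the error locator as the claimed Hankel system, which is exactly the consecutive-power (arithmetic-sequence) structure established in part (i).
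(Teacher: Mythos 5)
Your proof is correct, but note that the paper itself contains no proof of this theorem: it is stated as a result imported verbatim from the reference \cite{hurley} (``The following theorem \ldots is contained in \cite{hurley}''), so you are supplying an argument where the paper only supplies a citation. Your route for part (i) is the standard and natural one: reduce MDS-ness to the nonsingularity of every $r\times r$ submatrix of the generator matrix, identify each such submatrix as a Vandermonde matrix in the nodes $\omega^{kc_t}$, and use $\gcd(n,k)=1$ together with primitivity of $\omega$ to see the nodes are distinct. This is sound, and it is essentially the content of the companion Theorem \ref{genthmvan} (the Vandermonde version with the condition that $x_ix_j^{-1}$ is not a $k$th root of unity), specialized to $x_i=\omega^{i-1}$. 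One small patch you should make: you take the arithmetic sequence of row indices to start at $0$, whereas the theorem (and the paper's later use of it, including wrap-around modulo $n$) allows an arbitrary starting row $a$. With start $a$ the $(s,t)$-entry of the selected submatrix is $\omega^{(a+(s-1)k)c_t}=\omega^{ac_t}\,(\omega^{kc_t})^{s-1}$, i.e.\ each column is a nonzero scalar multiple of a Vandermonde column, so the determinant picks up the nonzero factor $\prod_t\omega^{ac_t}$ and the conclusion is unchanged; but this should be said. For part (ii) the theorem only asserts existence of efficient algorithms, and your assembly of the standard ingredients (transform-based encoding and syndrome computation, a Hankel $t\times(t+1)$ key-equation step) matches what the paper describes in Section \ref{complexity} and the worked example of Section \ref{err}; like the paper, you ultimately defer the detailed algorithms to \cite{hurley} and \cite{hurleycomp}, which is acceptable for an existence claim.
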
  

There is a similar, more general in some ways,   theorem for Vandermonde matrices:

\begin{theorem}\label{genthmvan}\cite{hurley} Let $V=V(x_1,x_2, \ldots, x_n)$ be a Vandermonde $n\ti n$ matrix over a field $\F$ with distinct and non-zero $x_i$. Let $\mathcal{C}$ be the unit-derived code obtained by choosing in order $r$ rows of $V$ in arithmetic sequence with difference $k$. If $(x_ix_j^{-1})$ is not a $k^{th}$ root of unity for $i\neq j$ then $\mathcal{C}$ is an $(n,r,n-r+1)$ mds code over $\F$. 

In particular the result holds for consecutive rows as then  $k=1$ and 
 $x_i\neq x_j$ for $i \neq j$.  
\end{theorem}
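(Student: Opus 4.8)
The plan is to reduce the claim to the standard generator-matrix characterization of MDS codes: an $(n,r)$ linear code with generator matrix $G$ is MDS (that is, has minimum distance $n-r+1$) if and only if every $r\ti r$ submatrix of $G$ formed by selecting $r$ of its $n$ columns is nonsingular. Here $G$ is the $r\ti n$ matrix whose rows are the chosen rows of $V$, namely rows with indices $i_1,i_2,\ldots,i_r$ forming an arithmetic progression $i_m=a+(m-1)k$; since these indices are distinct, $G$ has full row rank. The entire argument then comes down to computing one $r\ti r$ determinant and showing it is nonzero under the stated hypothesis.

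Since row $i$ of $V$ is $(x_1^{\,i},x_2^{\,i},\ldots,x_n^{\,i})$, selecting columns $c_1,\ldots,c_r$ and writing $y_j=x_{c_j}$ produces the submatrix $M$ with entries $M_{m,j}=y_j^{\,a+(m-1)k}$, where the $y_j$ are distinct and nonzero. The key step is to factor each entry as $y_j^{\,a+(m-1)k}=y_j^{\,a}\,(y_j^{\,k})^{m-1}$ and pull the scalar $y_j^{\,a}$ out of the $j$-th column, which gives
\[
\det M = \Big(\prod_{j=1}^{r} y_j^{\,a}\Big)\,\det\big((z_j)^{m-1}\big)_{1\le m,j\le r}, \qquad z_j := y_j^{\,k}=x_{c_j}^{\,k}.
\]
The remaining factor is an ordinary $r\ti r$ Vandermonde determinant in the $z_j$, equal to $\prod_{p<q}(z_q-z_p)$.

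Next I would read off exactly when $\det M$ vanishes. The prefactor $\prod_j y_j^{\,a}$ is nonzero because each $y_j\neq 0$, so $\det M\neq 0$ precisely when the $z_j$ are pairwise distinct, i.e.\ $x_{c_p}^{\,k}\neq x_{c_q}^{\,k}$ for $p\neq q$. Now $x_{c_p}^{\,k}=x_{c_q}^{\,k}$ is equivalent to $(x_{c_p}x_{c_q}^{-1})^{k}=1$, that is, to $x_{c_p}x_{c_q}^{-1}$ being a $k^{th}$ root of unity, which the hypothesis forbids for distinct indices. Hence $\det M\neq 0$ for every choice of $r$ columns, so $\mathcal{C}$ is MDS with $d=n-r+1$. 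The special case $k=1$ then follows immediately: a first root of unity must equal $1$, so the condition degenerates to $x_i\neq x_j$ for $i\neq j$, which is the standing assumption that the $x_i$ are distinct.

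I do not anticipate a serious obstacle, as the mathematical content lies entirely in the determinant factorization above. The only points requiring care are stating the MDS characterization correctly and confirming that the arithmetic-sequence indices $i_1,\ldots,i_r$ remain distinct so that $M$ is genuinely an $r$-column submatrix of a rank-$r$ generator; both are routine bookkeeping rather than substantive difficulties.
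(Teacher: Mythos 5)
Your proof is correct and is exactly the standard argument: reduce MDS-ness to the nonsingularity of every $r\times r$ column submatrix of the generator matrix, factor $y_j^{\,a}$ out of each column to expose a Vandermonde determinant in the $z_j=x_{c_j}^{\,k}$, and observe that the hypothesis on $x_ix_j^{-1}$ not being a $k^{th}$ root of unity is precisely the condition that the $z_j$ are pairwise distinct. The paper itself states this theorem as a citation to \cite{hurley} without reproducing a proof, and your argument supplies the expected one, including the correct degeneration to $x_i\neq x_j$ when $k=1$.
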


These are fundamental results. 

For  `rows in sequence' in the Fourier matrix cases, Theorem \ref{genthmfour},  and for some Vandermonde cases,  it is permitted that rows may {\em wrap around}
and then $e_k$ is taken to mean $e_{(k \mod n)}$. Thus for example Theorem \ref{genthmfour}
could be applied to a code generated by $\langle e_r, \ldots, e_{n-1},
e_0, e_1, \ldots, e_s\rangle$ where  $\langle e_0, e_1, \ldots,
e_{n-1}\rangle$ are the rows in order of a Fourier matrix. 

The general Vandermonde case is more difficult to deal with in practice but in any case using Fourier matrices is sufficient for coding purposes.  

Decoding methods for the codes produced  are given in the algorithms in
\cite{hurley} and in particular these are particularly nice for
the codes from Fourier matrices. 
The decoding methods are based on the decoding schemes  derived  in
\cite{hurleycomp} in connection with {\em compressed sensing} for solving underdetermined
systems using error-correcting codes. These decoding methods  themselves
are based on the error-correcting methods due to  Pellikaan \cite{pell}
which is a method of finding error-correcting pairs -- error-correcting pairs are shown to exist for the constructed codes and efficient decoding  algorithms are derived from this. These decoding algorithms are explicitly written down in detail
in \cite{hurley}. 
In addition the encoding itself is straightforward. 

  The complexity of encoding and decoding is  $\max \{ O(n\log
n), O(t^2)\}$ where $t= \floor{\frac{n-r}{2}}$;    
$t$ is the error-correcting capability of the code. The complexity is
given in Section \ref{complexity} and is derived in \cite{hurley}.

Let $F_n^*$ denote  the matrix with $F_nF_n^*= nI_{n\ti n}$ for the
Fourier matrix $F_n$. 
Denote the rows of $F_n$ in order  by $\{e_0,e_1, \ldots, e_{n-1}\}$ and denote 
the columns of $F_n^*$ in order by $\{f_0,f_1, \ldots, f_{n-1}\}$. {\em Then it is important to note that}  
$f_{i} = e_{n-i}\T, e_i = f_{n-i}\T$ with the convention that suffices
are taken modulo $n$. 
Also note $e_if_{i}= n$ and $e_if_j = 0, i\neq j$.
  

 Thus 

$$\begin{pmatrix}e_0 \\ e_1 \\ \vdots \\ e_{n-1}\end{pmatrix} (f_0,
f_1, f_2, \ldots, f_{n-1})= \begin{pmatrix}e_0 \\ e_1 \\ \vdots \\ e_{n-1}\end{pmatrix} (e_0\T,
e_{n-1}\T, e_{n-2}\T, \ldots, e_1\T) = nI_n$$




\subsection{Complexity}\label{complexity}
Efficient encoding and decoding algorithms exist for these codes by the
methods/algorithms  developed in \cite{hurley} which follow from those developed in \cite{hurleycomp}. In general the complexity is  
$\max\{O(n\log n), O(t^2)\}$ where $n$ is the length and $t$ is the
error-correcting capability, that is,  $t= \floor{\frac{d-1}{2}}$ where
$d$ is the distance. 
See the algorithms in \cite{hurley} for details;  there the decoding algorithms are derived  and are  written down precisely in suitable format. The decoding algorithms  reduce to finding a solution to a Hankel $t\ti (t+1)$ systems, which can be done in $O(t^2)$ time at worst,  and the other encoding and decoding algorithms are matrix multiplications which  can be reduced to multiplication  by a Fourier matrix which  takes  $O(n\log n)$ time.

\section{Maximum distance separable  codes}\label{general}  
\subsection{Given $n,r$}\label{general2}Suppose it is required to construct MDS  $(n,r)$ 
codes for given $n$ and $r$. 
First construct a $n\ti n$ Fourier matrix
over a finite field. A Fourier $n\ti n$ matrix is constructible over a
finite field of
characteristic $p$ where $p\,\not\vert \, n$, see  section \ref{type}.  
 Take $r$
rows in sequence with arithmetic difference $k$ satisfying
$\gcd(n,k)=1$ from this Fourier matrix. Then by Theorem
\ref{genthmfour}, see \cite{hurley} for details,  the code generated by
these rows is an $(n,r)$ MDS code. There are many different
ways for constructing  the $(n,r)$ code from the Fourier $n\ti
n$ matrix -- one could start at any row with $k=1$ and could also start
at any row for any $k$ satisfying $(n,k)=1$. A check matrix may be read
off immediately from section \ref{construct} and a direct decoding
algorithm of complexity $\max\{O(n \log n), O(t^2)\}$ is given in 
\cite{hurley}, where $t$  is the error-correcting capability.  

\subsection{MDS to required rate and error-correcting capability}\label{required}
Suppose it is required to construct an MDS  code of  rate $R$ 
 and to required error-correcting capability. The required code is of
 the form $(n,r)$ with $(n-r+1) \geq (2t+1)$ where $t$ is the required
 error-correcting capability. Now  $(n-r+1) \geq (2t+1)$ requires 
 $n(1-R) \geq 2t$. {\em Thus require $n\geq \frac{2t}{1-R}$}. With these
 requirements construct the Fourier $n\ti n$ and from this take $r\geq  nR$
 rows in arithmetic sequence with arithmetic difference $k$ satisfying
 $\gcd(n,k)=1$. The code constructed has the required parameters. The
 finite fields over which this Fourier matrix can be constructed 
 is  deduced from section \ref{type}.

\paragraph{Samples} It is required to construct a rate $R=\frac{7}{8}$
 code which can correct $25$ errors. Thus, from general form $n\geq
 \frac{2t}{1-R}$, require $n\geq
 \frac{50}{\frac{1}{8}}$ and so $n\geq 400$. 

Consider $n=400$. Construct a Fourier $400 \ti 400$ matrix $F_{400}$ over a
suitable finite field. Then $r= 350$ for rate $\frac{7}{8}$. 
Now take any $350$ rows in sequence from $F_{400}$ with arithmetic
difference $k$ satisfying $\gcd(400,k)= 1$. Now $k=1$ starting at first row  works in any case
but there are many more which are suitable. The code generated by these rows
is an $(400,350,51)$ code, Theorem \ref{genthmfour},  which can correct $25$ errors. 

Over which fields can the Fourier $400\ti 400$ matrix exist? The
characteristic of the field must not divide $400$ but finite fields of any other characteristic exist over which the
Fourier $400 \ti 400$ matrix is constructible. For example: the order of $3
\mod 400$ is $20$ so $GF(3^{20})$ is suitable; the order of $7 \mod 400$
is $4$ so $GF(7^4)$ is suitable. Exercise: Which other fields are suitable?  

However $401$ is prime and  the order of $401 \mod 400$ is $1$ and thus
the prime field $GF(401)$ is suitable. It is also easy to find a
primitive $400$ root of unity in $GF(401)$; indeed the order of $\om = (3 \mod
401)$ is $400$ in $GF(401)$ and this element may be used to generate the $400\ti
400$ Fourier matrix over $GF(401)$. The arithmetic is modular arithmetic in
$\Z_{401} = GF(401)$. 
\footnote{The Computer Algebra system GAP \cite{gap}  
has the command \ord$(r,m)$ which is useful. This system also has the coding package GUAVA with which experiments can be made. }

A field of characteristic $2$ close to the requirements may be
prescribed. Then let  $n=399$ and note that the order of $2 \mod 399$ is
$18$. Thus use the field $GF(2^{18})$ over which the Fourier $399 \ti
399$ matrix may be constructed. Take $348$ rows of this Fourier matrix in sequence
with arithmetic difference $k$ satisfying $(399, k)=1$ to form a $(399,348, 51)$ code which can correct
$25$ errors. Rate is $0.8746 ..$ which is close to required rate
$\frac{7}{8}$.  

Exercise: How many (different) $(n,r)$ MDS codes may be formed from this Fourier $n\ti n$
matrix over the finite field? Note the sequence may `wrap over' and then
the numbering is mod $n$.

\subsection{Infinite series with given rate}\label{infinite} Construct an infinite
series of codes with given rate $R$ such that the limit of the distance
by the length approaches $(1-R)$. 

Let $R= \frac{r}{n}$ be given.  Construct the Fourier $n\ti n$ matrix and
from this derive the $(n,r)$ MDS code as in section \ref{construct}. Let 
$n_i=i*n,
r_i=i*r$ for an increasing set of positive integers $\{i\}$. Construct the Fourier $n_i\ti n_i$ matrix and from this derive
an $(n_i,r_i)$ MDS code. The rate of the code is $\frac{r_i}{n_i} =
\frac{r}{n} = R$. The distance of the code is $d_i= (n_i-r_i+1)$. The
ratio of the length by the distance is $\frac{n_i-r_i+1}{n_i} = 1-R +
\frac{1}{n_i}$. Now as $i\rightarrow \infty$ it is seen that the ratio
of the distance by the length approaches $(1-R)$. 

Note that $0 < R < 1$ if and only if $0< (1-R) < 1$ so could start off
with a requirement that the limit approaches a certain fraction. 

There are many choices by this method giving  different series. At
each stage there are many different $n_i\ti n_i$ Fourier matrices to
choose from and within each of these are many choices of $r_i$ rows for
obtaining $(n_i,r_i)$ MDS codes.
   
By methods/algorithms of \cite{hurley} the codes have efficient 
encoding and decoding algorithms of complexity $\max\{O(n \log n),
t^2\}$ where $t$ is the error-correcting capability. 

  
\subsection{Series in characteristic  $p$ with given rate}\label{charp} Suppose codes over fields of characteristic $2$ are required. Now a
Fourier matrix of even size in characteristic $2$ cannot
exist. It is necessary to consider rates of the form 
$\frac{r}{n}$ where $n$ is odd in order for the general method of section \ref{infinite} to work in   characteristic $2$. The method of section \ref{infinite} is then 
applied by taking the increasing sequence $\{i\}$ to consist of odd elements only. Then construct the Fourier $(n*i)
\ti (n*i)$  matrix for odd $i$ (and odd $n$) in a field of 
characteristic $2$ -- see section \ref{type} on method  to form such a
Fourier $n*i\ti n*i$ matrix in a finite field of characteristic $2$. From this Fourier matrix construct an MDS
$(n*i,r*i)$ code with rate $\frac{r}{n}$ by method of Theorem \ref{genthmfour}; there are choices for this code as noted.  

As a sample consider the  rate 
$\frac{7}{9}$. Then Fourier $(9*i) \ti (9*i)$ matrices are constructible
over fields of characteristic $2$ for odd $i$.  From this 
$(9*i,7*i, 2*i+1)$ codes are constructed. 

Thus $(9,7,3)$ code over $GF(2^6)$, $(27,21,7)$ code over
$GF(2^{18})$, $(45, 35,11)$ code over $GF(2^{12})$, \\ $(63, 49,15)$ over
$GF(2^6)$, and so on, are constructed. The fields of characteristic $2$ used depend on
the order of $2 $ modulo the required length. 
The ratio of the distance by the length approaches $(1-R) = \frac{2}{9}$.

\quad

Similarly infinite series of codes over fields of characteristic $p$ are
constructed with given rate $\frac{r}{n}$ where 
$p \not\vert \, n$.
\paragraph{Sample} For example consider rate $R= \frac{7}{10}$ for   characteristic $3$. Then 
the method constructs MDS \\ $\{(10,7,4), (20,14,7),
(40,28,13), (50,35,16), (70, 49, 22), (80,56,25), \ldots \}$ codes in
fields of characteristic $3$. 

Now \ord$(3,10)= 4$, \ord$(3,20)=4$,
\ord$(3,40)=4$, \ord$(3,50)=20$, \ord$(3,70)= 12$, \ord$(3,80)=4 $, \ldots 
so these codes can be constructed  respectively over \\ $\{GF(3^4), GF(3^4),
GF(3^4), GF(3^{20}), GF(3^{12}), GF(3^4), \ldots \}$. It is seen that
$(80,56,25)$ is over a relatively small field $GF(81)$ and  can correct
$12$ errors. 
 The limit of
the distance over the length is $(1-R)=\frac{3}{10}$.    

\subsubsection{Note} In characteristic $p$ the rates $\frac{r}{n}$ attainable require $p\not\vert \, n$ so  that  the Fourier matrix $n\ti n$ can be constructed in characteristic $p$. This is not a great restriction. For any given fraction $R$ and any given $\ep > 0$ there exists a fraction with numerator not divisible by $p$ between $R$ and $R+\ep$. The details are omitted. For example suppose in characteristic $2$ the rate required is $\frac{3}{4}$   and $\ep >0$ is given. Say $\frac{1}{32} < \ep$ and then need a fraction of the required type between $\frac{3}{4}$ and $\frac{3}{4} + \frac{1}{32} = \frac{25}{32}$. Now $\frac{24}{31}$ will do and we can proceed with this fraction to construct the codes over characteristic $2$; the Fourier $31\ti 31$ matrix exists over $GF(2^5)$.   
\subsection{Infinite series in prime fields with given rate}\label{primes} Arithmetic in prime fields is particularly nice. Here we develop a method for constructing series of MDS codes over prime fields.

Suppose a rate $R$ is required, $0<R<1$. Let $p$ be a prime and consider
the field $GF(p)= \Z_p$. This has an element of order $(p-1)$ and thus
construct the Fourier $(p-1) \ti (p-1)$ matrix $F_{p-1}$ over $GF(p) =
\Z_{p}$. For this it is required to find a primitive $(p-1)$ root of
unity in $GF(p)=\Z_p$. \footnote{It seems there is no known algorithm for
      finding a generator of $(\Z_p/\{0\})$ that is substantially better
      than a brute force method - see Keith Conrad's notes \cite{conrad}. Note however there are precisely $\phi(p-1)$ generators.}
Let $r=\floor{(p-1)*R}$. Now $p$ must be large enough so that
$r\geq 1$. Form the $(p-1, r)$ MDS code over $F_{p-1}$. This has rate close
to $R$. 

Let $\{p_1, p_2, \ldots, p_i, \ldots \}$ be an infinite increasing set
  of primes such that $(p_1-1)*R \geq 1$ in which case $(p_i-1)*R \geq  1$ for each $i$. Form the Fourier $(p_i-1)\ti
  (p_i-1)$ matrix over $GF(p_i)$. Let $r_i= \floor{(p_i-1)*R}$. Form the
  $(p_i-1, r_i)$ MDS code over $GF(p_i)$. The ratio of the distance to
  the length is $\frac{p_i-1- r_i+1}{p_i-1} = 
  1- \frac{r_i}{p_i-1} + \frac{1}{p_i-1}$. Now as $i\rightarrow \infty$
  this ratio approaches $(1-R)$. 

  \paragraph{Sample} Let $\{p_1, p_2, \ldots \}$ be the  primes of the
  form $(4n+1)$ 
  and let $R=\frac{3}{4}$. Now $p_1=5,p_2= 13, p_3=17, \ldots$. Let $r_i= (p_i-1)*R = 4*j*\frac{3}{4} = j*3$ for some $j$. Form the $(p_i-1, r_i)$ code from Fourier $(p_i-1) \ti (p_i-1)$ matrix over $GF(p_i)$.

  Get codes $\{(4,3,2), (12, 9, 4), (16, 12, 5), (28,21, 8), (36,
  27,10), \ldots \}$ over, respectively, the following fields \\ $\{GF(5), GF(13), GF(17), GF(29), GF(37), \ldots \}$. 


  \subsection{Sample of the workings}\label{sample} Here is an example of MDS codes in
  $GF(13)= \Z_{13}$. A primitive element in $GF(13)$ is $\om = (2 \mod
  13)$. The Fourier $12 \ti 12$ matrix with this $\om$ as the element of
  order $12$ in $GF(13) = \Z_{13}$ is:

  $F_{12} = \left( \begin{smallmatrix}1& 1& 1& 1& 1& 1& 1& 1& 1& 1& 1& 1 \\ 
   1& 2& 4& 8& 3 &6 &12 &11 &9 &5 &10 &7 \\ 
   1& 4& 3 &12& 9 &10 &1 &4 &3 &12& 9 &10  \\ 
  1&8 &12& 5& 1 &8 &12 &5 &1 &8 &12 &5 \\ 
   1 &3& 9& 1& 3& 9& 1& 3 &9 &1 &3 &9 \\ 
   1& 6 &10 &8 &9 &2 &12& 7& 3& 5 &4 &11 \\ 
   1 &12& 1& 12& 1& 12& 1& 12& 1 &12& 1& 12 \\ 
   1 &11 &4 &5 &3 &7 &12 &2 &9 &8 &10& 6  \\ 
  1& 9& 3& 1 &9 &3& 1& 9& 3& 1& 9&3 \\ 
   1 &5 &12& 8& 1& 5& 12& 8 &1& 5& 12& 8 \\ 
   1 &10& 9 &12& 3& 4& 1 &10& 9& 12& 3& 4 \\ 
   1& 7& 10& 5& 9& 11& 12& 6& 3& 8& 4& 2 
  \end{smallmatrix}\right)$

  Let the rows of $F_{12}$ in order be denoted by $\{e_0, e_1, \ldots, e_{11}\}$.
  
  Various MDS codes over $GF(13)$ may be constructed from $F_{12}$.

  Two of the $(12,6,7)$ codes are as  follows:  
    
$$K=\left[\begin{smallmatrix} 1& 1& 1& 1& 1& 1& 1& 1& 1& 1& 1& 1  
\\ 1& 2&4& 8& 3& 6& 12& 11& 9& 5& 10& 7 
\\ 1& 4& 3& 12& 9& 10& 1& 4& 3& 12& 9& 10 
\\ 1& 8& 12& 5& 1& 8& 12& 5& 1& 8& 12& 5  
\\ 1& 3& 9& 1& 3& 9& 1& 3& 9& 1& 3&9  
\\ 1 & 6& 10& 8& 9& 2& 12& 7& 3 &5& 4&11 \end{smallmatrix}\right], \, L=\left[\begin{smallmatrix} 1& 2& 4& 8& 3 &6 &12 &11 &9 &5 &10 &7 \\
        1 &12& 1& 12& 1& 12& 1& 12& 1 &12& 1& 12 \\ 1& 7& 10& 5& 9& 11& 12& 6& 3& 8& 4& 2 \\  1 &3& 9& 1& 3& 9& 1& 3 &9 &1 &3 &9 \\  1 &5 &12& 8& 1& 5& 12& 8 &1& 5& 12& 8 \\  1& 4& 3 &12& 9 &10 &1 &4 &3 &12& 9 &10  \end{smallmatrix}\right]$$
    
The first matrix  takes the first $6$ rows of $F_{12}$; the second matrix  takes rows  $\{e_1, e_6, e_{11}, e_4, e_9,e_2\}$ which are  $6$ rows in sequence with arithmetic difference $5$, $\gcd(12,5) = 1$,   starting with the second row.  
These are  generator matrices for  $(12,6,7)$ codes over $GF(13) =\Z_{13}$ and each  can correct $3$ errors.  

    \subsubsection{Correcting errors sample}\label{err}  Efficient decoding algorithms for
    the codes  are established in \cite{hurley}.   Here is an example to
    show how the algorithms work in practice.  The matrix $K$ as above, formed from  the first
    $6$ rows of  Fourier matrix $F_{12}$, is the generator matrix of a $(12,6,7)$ code. 
    Apply Algorithm 6.1 from \cite{hurley} to correct up to $3$
    errors of the code as follows. Note the work is done in $\Z_{13} = GF(13)$ using  modular arithmetic.  
    \begin{enumerate} \item The word $\un{w}=( 8, 9, 2, 6, 3, 3, 10, 8, 4, 1, 5, 7 )$ is received.
      \item Apply check matrix to $\un{w}$ and get $\un{e}= (2, 9, 12, 10, 11, 11 )$. Thus there are errors and $\un{w}$ is not a codeword. (The check matrix   $(e_1\T,e_2\T, e_3\T,e_4\T,e_5\T,e_6\T)$ is immediate, see section \ref{construct}.)
      \item Find a non-zero element of the kernel of $\begin{pmatrix}
						  2&9&12&10 \\ 9&12&10&
						  11 \\ 12& 10& 11&11
						  \end{pmatrix}$. This
	    is a $3\ti 4$ Hankel matrix, formed from $\un{e}$; the first row
	    consists of elements $(1-4)$ of $\un{e}$, the second row
	    consists of elements  $(2-5)$ of $\un{e}$, and the third row consists of
	    elements 
	    $(3-6)$ of $\un{e}$.  A non-zero element of the 
	    kernel is  $\un{x}= (7,1,7,1)\T$.
      \item Now  $\un{a} = (e_1,e_2,e_3,e_4)*\un{x} = (3, 12, 7, 0, 1, 0, 1, 2, 4, 0, 10, 12 )$. Thus errors occur at $4^{th},6^{th}, 10^{th}$ positions (which are the positions of the zeros of $\un{a}$).
      \item Solve  (from $4^{th}, 6^{th}, 10^{th}$ columns of $(2-7)$ rows of $F_{12}$): 
        
        $\begin{pmatrix} 8& 6 & 5 \\ 12 & 10 &12 \\ 5&8&8 \\ 1&9&1 \\ 8&2&5\\ 12& 12 & 12\end{pmatrix}\begin{pmatrix}x_1 \\ x_ 2 \\ x_3 \end{pmatrix} = \begin{pmatrix} 2 \\9\\ 12\\ 10 \\ 11 \\ 11 \end{pmatrix}$

        In fact only the first three equations need be solved; answer is $(10,1,4)\T$. Thus error vector is $\un{k}= (0,0,0,10,0,1,0,0,0,4,0,0)$.
      \item Correct codeword is $\un{c}= \un{w}- \un{k} = (8, 9, 2, 9, 3, 2, 10, 8, 4, 10, 5, 7)  $.
      \item If required, the original data word can be obtained directly
    by multiplying by the right inverse of  the generator matrix; the
    right inverse is read off as  $K= (e_0,
    e_{11},e_{10},e_9,e_8,e_7)\T*12 $.  Then $\un{c}*K = (1,2,3,4,5,6)$
    which is the original data word to be safely  transmitted. \end{enumerate}

    The equations to be solved are Hankel matrices of size the order of $t\ti t$ where $t$ is the error-correcting capability.


\subsection{Length $(2^q-1)$ MDS codes  in $GF(2^q)$}\label{length} 
$2^3-1 =7, 2^4-1 = 15, 2^5-1=31, 2^6-1 = 63, \ldots$. 

In general consider the characteristic $2$ field $GF(2^q)$. In this
field acquire an element of order $n=(2^q-1)$ and construct the Fourier
$n\ti n$ matrix over $GF(2^q)$. From this,  MDS $(n, r)$ codes are
constructed for $1\leq r \leq n$. It is better  to take odd $r$ from
consideration of the error-correcting capability. 
\begin{enumerate}
\item $2^3-1=7$. From the Fourier $7\ti 7$ matrix over $GF(2^3)$
      construct the MDS $\{(7,5,3),(7,3,5),(7,1,7)\}$ codes which can
      correct respectively $\{1,2,3\}$ errors.
    \item $2^4-1= 15$.
From the Fourier $15\ti 15$ matrix over $GF(2^4)$
      construct \\ $\{(15,13,3), (15,11,5),(15,9,7), (15,7,9),(15,5,11),
      (15,3,13), (15,1, 15)\}$ MDS codes over $GF(2^4)$ which can correct
      respectively $\{1,2,3,4,5,6,7\}$ errors. 
    \item $2^5-1 = 31$.
      From the Fourier $31\ti 31$ matrix over $GF(2^5)$
      construct the \\ MDS  $\{(31,29, 3), (31,27,5), (31,25,7), \ldots,
      (31,3,29), (31,1,31)\}$ codes which can respectively correct
      $\{1,2,3,\ldots, 14,15\}$ errors.

\item $2^8 -1 = 255$. Thus MDS codes $(255,r)$ are constructed over
      $GF(2^8)$ for all $r$. 
  These could be compared to Reed-Solomon codes used in practice and perform better. 

Even further consider $2^9-1= 511$. Then MDS codes $(511,r)$ are constructed over $GF(2^9)$. For example  $(511,495,17), (511,487,25)$ codes are
      constructed over $GF(2^9)$; the decoding algorithm involves
      finding a solution to $9\ti 8$, $13 \ti 12$ (respectively) Hankel
      systems of equations,  and matrix Fourier multiplication.   

The codes over prime fields  in section \ref{prime} of length $256$ over
      $GF(257)= \Z_{257}$ and of length $508$ over $GF(509)=\Z_{509}$
      perform better.   
\item \ldots \ldots
\item General $2^q-1=n$. From the Fourier $n\ti n$  matrix over
      $GF(2^q)$ construct the MDS $(n,n-2,3), (n,n-4,5), (n,n-6, 7),
      \ldots, (n,n-2m, 2m+1), \ldots, (n,3,n-2), (n,1,n)$ codes which
      can correct respectively $\{1,2,3,\ldots, m, \ldots, \frac{n-3}{2}, 
\frac{n-1}{2}\}$ errors. 
\end{enumerate} 


It is clear that similar series of relatively large length MDS codes may
be constructed over finite fields of characteristics other than $2$. 

\subsection{Length $(p-1)$ codes in prime field $GF(p)=\Z_p$}\label{prime}
Construct large length MDS codes over prime fields. This is a particular
general case of section \ref{length} but is singled out as the
arithmetic involved, modular arithmetic,  is smooth and very  efficient and the examples are nice and practical.  
 For any prime $p$ the Fourier $(p-1)\ti (p-1)$ matrix exists
over $GF(p) =\Z_{p}$. A primitive $(p-1)$ root of
unity is required in $GF(p)$ \footnote{It seems there is no known algorithm in which
to find  a generator of $(\Z_p/\{0\})$ that is substantially better
      than a brute force method - see Keith Conrad's notes
      \cite{conrad}. Note however there are precisely $\phi(p-1)$
      primitive $(p-1)$ roots of unity in $GF(p)=\Z_p$.}. 
The arithmetic is modular
arithmetic in $\Z_p$ which is nice.  The general method then allows the
construction of MDS $(p-1,r)$ codes over $GF(p)$ for any $1 \leq r \leq
(p-1)$. It is better  to use even  $r$,  so that the distance is then odd -- for $p>2$.  

Here are samples: 

\begin{enumerate} 
\item $p=11$. Then MDS codes of the form $\{(10,8,3), (10,6, 5), (10,4,7),
      (10,2,9)\}$ are constructed over $GF(11)=\Z_{11}$. They can
      respectively correct $\{1,2,3,4\}$ errors. A primitive $10^{th}$ root
      of unity is $(2 \mod 11)$; also $(7 \mod 11)$ is a primitive
      $10^{th}$ root of unity. The method allows the construction of (at
      least) $\phi(11) = 10$ MDS $(12,r)$ codes for each $r$. 
      
\item $p=13$. Then MDS codes of the forms $\{(12,10,3),(12,8,5),(12,6,7),
      (12,4,9), (12,2,11)\}$ are constructed  over $GF(13) =\Z_{13}$ which
      can correct respectively $ \{1,2,3,4,5\}$ errors. A primitive
      $12^{th}$ root of unity is $(2 \mod 13)$ or $(7 \mod 13)$. 
  \item $p=17$. Then MDS codes of the forms

	$\{(16,14,3),(16,12,5),(16,10,7),(16,8,9),
	(16,6,11),(16,4,13),(16,2,15)\}$ which can correct respectively
	$\{1,2,3,4,5,6,7\}$ errors are constructed over $GF(17)=
	\Z_{17}$. A primitive $16^{th}$ root of unity in $GF(17)$ is $ (3 \mod 17)$
	or $(5 \mod 17)$ and there are $\phi(16) = 8$ such generators. 
\item \ldots \ldots
\item\label{compare} {Relatively large sample with modular arithmetic: for comparison.}
Consider $GF(257) =\Z_{257}$ and $257$ is prime. 
Construct the Fourier matrix $F_{256}$ with a primitive $256^{th}$ root of
      unity $\om$ in $GF(257)$. Since the order of $3 \mod 257$ is $256$
      then a choice for $\om$ is $ (3 \mod 257)$. Denote the rows of
      $F_{256}$ in order by $\{e_0, e_1, \ldots, e_{255}\}$.

Suppose a dimension $r$ is required. Choose $\C = \langle e_0,
      e_1, \ldots, e_{r-1}\rangle $ to get an MDS $(256,r)$
      code. The arithmetic is modular arithmetic, $\mod 257$, and work
      is done   with powers of $(3 \mod 257)$. In addition  $(5 \mod 257)$ or $(7 \mod 257)$ could  be
      used to generate  the Fourier $256\ti 256$ matrix over
      $GF(257)=\Z_{257}$; indeed there exist $\phi(256)= 128$ generators
      that could be used to generate the Fourier matrix.

Note that $(256,240,17)$ and $(256, 224,23)$ codes over $GF(257)$ are
      constructed as well as other rate codes. These particular ones
      could be compared to the Reed-Solomon $(255,239,17)$ and
      $(255,223,23)$ codes which are in practical use; the ones from $GF(257)$ perform better and faster. There is a
      much bigger choice for rate and error-correcting capability.  

Bigger primes could also be used. Taking $p=509$ gives $(508,r)$ MDS
      codes for any $1<r<508$. Thus for example $(508, 486,23)$ MDS
      codes over $GF(509)=\Z_{509}$ are constructed. 

      The method allows the construction of $\phi(256)=128$ such MDS
      $(256, r)$ codes with different generators for the Fourier
      matrix. For larger primes the  number that could be used for the 
      construction of the Fourier matrix is substantial and
      cryptographic methods could   be
      devised from such considerations. For example for the prime $p=2^{31}-1$ the Fourier
      $(p-1)\ti (p-1) $ matrix exists over $GF(p)$ and 
      $\phi(p-1)=534600000$ elements could be used to generate the
      Fourier matrix. 
  \item The $p$ can be very large and the arithmetic is still
	doable. For example $p=10009$ allows the construction of
	$(10008,r)$ MDS codes over $GF(10009) = \Z_{10009}$. If $100$
	errors are required to be corrected the scheme supplies
	$(10008, 9808, 201)$ MDS codes over $GF(10009) = \Z_{10009}$
	which have large rate $\approx .98$ and can correct $100$ errors. The arithmetic is modular
	arithmetic. The order of $\om = (11 \mod 10009)$ is $10008$ so
	this $\om$ could be used to generate the Fourier $10008\ti 10008$
	matrix over $GF(10009)=\Z_{10009}$; indeed there are
	$\phi(10008)= 3312$ different elements in $GF(10009)= \Z_{10009}$ that could be used to
	generate the Fourier $10008\ti 10008$ matrix.   
\item General $p$. Then MDS codes of the form $(p-1,p-3,3),(p-1,p-5,5),
      (p-1,p-7,7), \ldots, (p-1, p- (2i+1),2i+1), \ldots, (p-1,2,p-2)$ are
      constructed which can respectively correct $\{1,2,3,\ldots, i,
      \ldots \frac{p-3}{2}\}$ errors are constructed. A primitive modular
      element (of order $(p-1)$) is obtained in $GF(p)=\Z_p$ with which to construct the Fourier matrix; as already noted it seems a brute force method for obtaining such seems to be as good as any. 
\end{enumerate}
\subsection{The  fields}\label{type} Suppose $n$ is given and it is required 
to find finite fields over which a Fourier $n\ti n$ matrix exists. The
following argument is essentially taken from \cite{hurley}. It is included for clarity and completeness and is necessary for deciding on the relevant fields to be used in cases.

Note first of
all that the field must have characteristic which does not divide $n$ in
order for the Fourier $n\ti n$ matrix to exist over the field.
\begin{proposition} There exists a finite field of characteristic $p$
 containing an $n^{th}$ root of unity for given $n$ if and only if
 $p\not\vert \, n$. 
\end{proposition}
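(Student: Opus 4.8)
The plan is to prove both implications, reading ``$n^{th}$ root of unity'' here as a \emph{primitive} $n^{th}$ root of unity in the sense defined earlier in the excerpt; otherwise $1_{\F}$ would trivially satisfy $x^n = 1_{\F}$ and one direction would be vacuous. I would treat necessity first because it is the shorter argument, then give sufficiency in a form that also names the explicit field to be used.

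For necessity ($\Rightarrow$), I would assume some field $\F$ of characteristic $p$ contains a primitive $n^{th}$ root of unity $\om$ and suppose for contradiction that $p \mid n$. Writing $n = p^a m$ with $p \nmid m$ and $a \geq 1$, the Frobenius identity in characteristic $p$ gives $x^n - 1 = (x^m - 1)^{p^a}$, so every root of $x^n - 1$ in $\F$ is already a root of $x^m - 1$. Hence there are at most $m < n$ such roots, the order of $\om$ divides $m$, and $\om$ cannot be primitive of order $n$. This contradiction forces $p \nmid n$.

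For sufficiency ($\Leftarrow$), I would assume $p \nmid n$. Since $p$ is prime this gives $\gcd(p,n) = 1$, so $p$ is a unit modulo $n$ and has a well-defined multiplicative order; let $d$ be the least positive integer with $p^d \equiv 1 \pmod n$. Setting $q = p^d$ and taking the field $GF(q)$ of characteristic $p$, the congruence yields $n \mid (q-1)$. The multiplicative group $GF(q)^{*}$ is cyclic of order $q-1$, and a cyclic group of order $N$ contains an element of order exactly $e$ for every divisor $e$ of $N$; applying this with $e = n$ produces an $\om \in GF(q)^{*}$ of order exactly $n$, a primitive $n^{th}$ root of unity. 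This proves existence and simultaneously identifies the explicit field $GF(p^{d})$ with $d$ the order of $p$ modulo $n$, matching the field choices made throughout the examples.

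The routine parts are the Frobenius factorisation and the order bookkeeping. The step requiring the most care is the sufficiency claim that $GF(q)^{*}$ contains an element of order \emph{exactly} $n$, not merely one satisfying $\om^n = 1_{\F}$: this is precisely where cyclicity of the multiplicative group of a finite field, together with $n \mid (q-1)$, is essential. As an alternative I could bypass naming $d$ by arguing that $x^n - 1$ is separable over $GF(p)$ when $p \nmid n$, since its derivative $n x^{n-1}$ is then nonzero and coprime to it; its splitting field is a finite field of characteristic $p$ in which $x^n - 1$ has $n$ distinct roots, and these form a finite multiplicative subgroup, hence a cyclic group of order $n$ whose generator is the required primitive $n^{th}$ root of unity.
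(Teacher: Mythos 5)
Your proof is correct and, for the sufficiency direction, is essentially the paper's own argument: both take $d$ (the paper's $\beta$) to be the multiplicative order of $p$ modulo $n$, pass to $GF(p^{d})$, and extract an element of order exactly $n$ from the cyclic group $GF(p^{d})^{*}$ of order $p^{d}-1$ divisible by $n$. Your necessity direction, via the Frobenius factorisation $x^{n}-1=(x^{m}-1)^{p^{a}}$, is a fuller and more careful justification than the paper's one-line remark that $n=0$ in characteristic $p$, but it reaches the same conclusion and does not constitute a different overall route.
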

\begin{proof} Let
$p$ be a prime which does not divide $n$.  Hence  $p^{\phi(n)} \equiv 1
\mod n$ by Euler's theorem where $\phi$ denotes the Euler $\phi$ function. More specifically let $\be$ be the least
positive integer such that $p^{\be} \equiv 1
\mod n$. Consider $GF(p^\be)$. Let $\de$ be a primitive element in
$GF(p^\be)$. Then $\de$ has order $(p^\be-1)$ in $GF(p^\be)$ and $(p^\be -1) = s n$ for
some $s$. Thus $\om= \de^s$ has order $n$ in $GF(p^\be)$.

On the other hand if $p/n$ then $n=0$ in a field of characteristic $p$ and so no $n^{th}$ root of unity can exist in the field.
\end{proof}

The proof is  constructive. Let $n$ be given and $p\not\vert \, n$. Let $\be$ be the least power such that $p^\be \equiv 1 \mod n$; it is known that $p^{\phi(n)} \equiv 1 \mod n$ and thus $\be$ is a divisor of $\phi(n)$. Then the Fourier $n\ti n$ matrix over $GF(p^\be)$ exists.  
\paragraph{Sample}\label{fields}
Suppose $n=52$. The prime divisors of $n$ are $2,13$ so take any other prime $p$ and then there is a field of characteristic $p$ which contains a $52^{nd}$ root of unity. For example take $p=3$. Know $3^{\phi(52)} \equiv 1 \mod 52$ and $\phi(52) = 24$ but indeed $3^6 \equiv 1 \mod 52$. Thus the field $GF(3^6)$ contains a primitive $52^{nd}$ root of unity and the Fourier $52 \ti 52$ matrix exists in $GF(3^6)$. Also $5^4\equiv 1 \mod 52$, and so $GF(5^4)$ can be used.  Now $5^4 = 625 < 729 = 3^6$ so $GF(5^4)$ is a smaller field with  which to work.

Even better though is $GF(53)= \Z_{53}$ which is a prime field. This has an element of order $52$ from which the Fourier $52\ti 52$ matrix can be formed. Now $\om = (2 \mod 53)$ is an element of order $52$ in $GF(53)$. Work and codes with the resulting Fourier $52\ti 52$ matrix can then be done in modular arithmetic, within $\Z_{53}$, using powers of $(2 \mod 53)$.

 \subsubsection{Developments on different {\em types} of MDS codes that can be constructed}\label{further} 
This section is for information on developments and is not required subsequently. 
 
Particular {\em types} of MDS codes may be required. These are not dealt with here but the following is noted. 

\begin{itemize} 
\item  A quantum MDS code is one  of the form $[[n,r,d]]$ where $2d=n-r+2$, see \cite{ash} for details.  In \cite{quantum} the methods are applied to construct and develop MDS quantum
      codes of different types and to required specifications. This is done by requiring the constructed codes to be {\em dual-containing MDS codes} from which {\em quantum MDS
      error-correcting codes}  are constructed from  the CSS construction  developed in \cite{calderbank,good}.  

        This is further developed for the construction and development of   {\em Entanglement assisted quantum
      error-correcting codes}, EAQECC, of different types and to required 
      specifications  in \cite{eaqecc}.
\item 
  In \cite{LCD} Linear complementary dual (LCD), MDS codes
		      are constructed based on the general constructions. 
		      An LCD code $\C$ is a code such
		      that $\C\cap \C^\perp = 0$. These have found use
		      in security, in data storage and communications'
		      systems. In \cite{LCD} the rows are chosen according to
		      a particular formulation so as to derive LCD codes which are also MDS codes. 
                     
    \item In \cite{hurleycomp} error-correcting codes, similar to ones here,
      are used for solving
      {\em underdetermined systems of equations} for use in {\em compressed
      sensing}.
\item By using rows of the Fourier matrix as matrices for polynomials,  
  MDS convolutional codes, achieving the {\em generalized Singleton bound} see \cite{ros}, 
  are constructed and
  analysed in  \cite{hurconv2}.
  \item The codes developed here seem particularly suitable for use in
	McEliece type encryption/decryption, \cite{mceliece2}; this has
	yet to be investigated.  
\end{itemize}

\end{document}